\theoremstyle{plain}
\newtheorem{fact}{\protect\factname}
\renewcommand\[{\begin{equation}}
\renewcommand\]{\end{equation}}
\providecommand{\factname}{Fact}
\begin{document}
\global\long\def\ket#1{|#1\rangle}%
\global\long\def\bra#1{\langle#1|}%
\global\long\def\braket#1#2{\left\langle #1\middle|#2\right\rangle }%
\global\long\def\diff{\mathrm{d}}%
\global\long\def\inf{\infty}%
\global\long\def\pd#1#2{\frac{\partial#1}{\partial#2}}%
\global\long\def\Tr{\mathrm{Tr}}%
\title{Hardware-efficient variational quantum algorithm in trapped-ion quantum computer} 
\author{J.-Z. Zhuang} 
\affiliation{Center for Quantum Information, Institute for Interdisciplinary Information Sciences, Tsinghua University, Beijing 100084, PR China} 
\affiliation{Shanghai Qi Zhi Institute, AI Tower, Xuhui District, Shanghai 200232, China}
\author{Y.-K. Wu} 
\email{wyukai@mail.tsinghua.edu.cn} 
\affiliation{Center for Quantum Information, Institute for Interdisciplinary Information Sciences, Tsinghua University, Beijing 100084, PR China} 
\affiliation{Shanghai Qi Zhi Institute, AI Tower, Xuhui District, Shanghai 200232, China}
\affiliation{Hefei National Laboratory, Hefei 230088, PR China} 

\author{L.-M. Duan} 
\email{lmduan@tsinghua.edu.cn} 
\affiliation{Center for Quantum Information, Institute for Interdisciplinary Information Sciences, Tsinghua University, Beijing 100084, PR China} 
\affiliation{Hefei National Laboratory, Hefei 230088, PR China} 
\affiliation{New Cornerstone Science Laboratory, Beijing 100084, PR China}

\captionsetup[figure]{labelfont={bf},name={FIG.},labelsep=period, justification=raggedright} 
\begin{abstract}
We study a hardware-efficient variational quantum algorithm ansatz
tailored for the trapped-ion quantum simulator, HEA-TI. We leverage
programmable single-qubit rotations and global spin-spin interactions
among all ions, reducing the dependence on resource-intensive two-qubit
gates in conventional gate-based methods. We apply HEA-TI to state
engineering of cluster states and analyze the scaling of required
quantum resources. We also apply HEA-TI to solve the ground state
problem of chemical molecules $\mathrm{H_{2}}$, $\mathrm{LiH}$ and
$\mathrm{F_{2}}$. We numerically analyze the quantum computing resources
required to achieve chemical accuracy and examine the performance
under realistic experimental noise and statistical fluctuation. The
efficiency of this ansatz is shown to be comparable to other commonly
used variational ansatzes like UCCSD, with the advantage of substantially
easier implementation in the trapped-ion quantum simulator. This approach
showcases the hardware-efficient ansatz as a powerful tool for the
application of the near-term quantum computer.
\end{abstract}
\maketitle

\section{Introduction}

Trapped-ion systems have emerged as one of the leading platforms for
quantum information processing \citep{bruzewiczTrappedionQuantumComputing2019,monroeProgrammableQuantumSimulations2021}.
Apart from long coherence time and high-fidelity initialization, single-qubit
operation, and readout \citep{leibfriedQuantumDynamicsSingle2003},
trapped ions support long-range spin-spin interactions mediated by
collective phonon modes under the Coulomb force between them \citep{ciracQuantumComputationsCold1995,sorensenEntanglementQuantumComputation2000}.
Recent experiments have demonstrated tunable coupling range and patterns
of long-range interactions across large-scale ion crystal systems
\citep{monroeProgrammableQuantumSimulations2021,guoSiteresolvedTwodimensionalQuantum2024}.
Such interaction distinguishes trapped-ion systems from other quantum
computing platforms \citep{saffmanQuantumComputingAtomic2016,huangSuperconductingQuantumComputing2020,kjaergaardSuperconductingQubitsCurrent2020,wuConciseReviewRydberg2021}
in which qubits mainly possess short-range nearest-neighbor interactions,
and it enables the study of classically intractable many-body quantum
dynamics like prethermalization \citep{neyenhuisObservationPrethermalizationLongrange2017},
information scrambling \citep{landsmanVerifiedQuantumInformation2019,joshiQuantumInformationScrambling2020},
and dynamical phase transitions \citep{jurcevicDirectObservationDynamical2017,zhangObservationManybodyDynamical2017}.

Recent advances in trapped-ion systems have also enabled the use of
variational quantum algorithms (VQAs) \citep{mollQuantumOptimizationUsing2018,tillyVariationalQuantumEigensolver2022}
to find the ground states of non-trivial Hamiltonians \citep{kokailSelfverifyingVariationalQuantum2019,zhuTrainingQuantumCircuits2019,zhaoOrbitaloptimizedPaircorrelatedElectron2023,ollitraultEstimationElectrostaticInteraction2024}.
The ground state problem is important across various disciplines \citep{jensenIntroductionComputationalChemistry2017,kohnNobelLectureElectronic1999,ccernyQuantumComputersIntractable1993},
with various quantum algorithms proposed to address it \citep{yarkoniQuantumAnnealingIndustry2022,tillyVariationalQuantumEigensolver2022,verstraeteQuantumComputationQuantumstate2009}.
Among them, the VQA is specifically designed to reduce the requirements
on quantum devices by integrating quantum computing with classical
optimization \citep{yungTransistorTrappedionComputers2014,peruzzoVariationalEigenvalueSolver2014}.
As shown in Fig. \ref{fig:model}, the quantum system runs a sequence
of parametrized operations to generate variational trial states. The
parameters are controlled and optimized iteratively by a classical
computer to minimize a cost function according to the quantum system's
measurement outcome. This method has been applied to a wide range
of key problems, including those in quantum chemistry \citep{peruzzoVariationalEigenvalueSolver2014,googleaiquantumandcollaboratorsHartreeFockSuperconductingQubit2020},
condensed matter physics \citep{liuProbingManybodyLocalization2023},
and quantum field theory \citep{kokailSelfverifyingVariationalQuantum2019}.
\textcolor{blue}{}

\begin{figure}
\includegraphics[width=1\columnwidth]{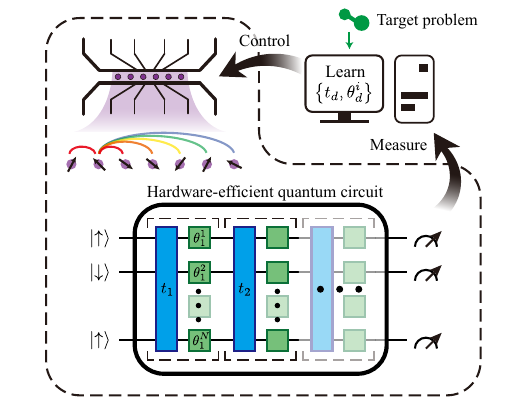}\caption{Scheme for the variational quantum algorithm in a trapped-ion quantum
computer: device, quantum circuit, and measurement feedback. Bottom:
Variational control parameters are passed to the trapped-ion quantum
simulator, which generates trial states through quantum dynamics from
a set of resource Hamiltonians, consisting of global entangling operations
(blue boxes) and single-qubit operations (green boxes). Energy and
gradient are obtained by quantum projective measurements on the Pauli
basis. A classical computer stores the obtained measurement results
and allows for data re-evaluation for different Hamiltonian parameters.
It also optimizes the parameters $\left\{ t_{d},\theta_{d}^{i}\right\} $
based on the measurement outcomes. \label{fig:model}}
\end{figure}

Despite the reduced requirement on fault-tolerant universal quantum
computing, VQA remains challenging to implement across all current
quantum platforms. Numerous efforts have focused on designing variational
ansatz to improve representability and trainability \citep{mcardleQuantumComputationalChemistry2020,cadeStrategiesSolvingFermiHubbard2020,wiersemaExploringEntanglementOptimization2020}.
However, these approaches often require an extensive array of quantum
logical gates to compose complex operations tailored to specific problems
\citep{romeroStrategiesQuantumComputing2018,zhuGenerativeQuantumLearning2022,backesDynamicalMeanfieldTheory2023}.
Indeed, while recent experiments are already demonstrating a remarkable
level of quantum control \citep{aruteQuantumSupremacyUsing2019,wuStrongQuantumComputational2021,kimEvidenceUtilityQuantum2023,bluvsteinLogicalQuantumProcessor2024},
the realization of high-fidelity multi-qubit gates in large-scale
systems remains a long-term challenge for all quantum computing platforms.
Therefore, suitable ansatz is highly needed to exploit the strengths
and avoid the weaknesses of a device. Here, we investigate a hardware-efficient
ansatz for trapped-ion systems that addresses these challenges by
leveraging the inherent long-range interactions among ions.

The hardware-efficient ansatz (HEA) for variational circuits is composed
of alternating layers of native gates tailored to a specific quantum
computer \citep{mollQuantumOptimizationUsing2018,tillyVariationalQuantumEigensolver2022,kandalaHardwareefficientVariationalQuantum2017}.
Originally, HEA was designed for superconducting circuits \citep{kandalaHardwareefficientVariationalQuantum2017}.
It utilizes single-qubit gates and entanglers, with the entanglers
constructed by ladders of nearest-neighbor two-qubit entangling gates.
In trapped-ion systems, a natural choice of the entangler layer is
the global evolution under spin-spin interaction Hamiltonian \citep{monroeProgrammableQuantumSimulations2021}.
Such ansatz was experimentally implemented in previous work \citep{kokailSelfverifyingVariationalQuantum2019,paganoQuantumApproximateOptimization2020},
but detailed theoretical analyses like the scaling of required resources
and the sensitivity to the experimental errors have not been investigated. 

In this work, we investigate a hardware-efficient variational ansatz
tailored for the trapped-ion quantum simulator (HEA-TI). In Sec. \ref{sec:Hardware-efficient-variational-a},
we describe our model for trapped-ion devices and then introduce the
HEA-TI. We also prove the universality of this ansatz. In Sec. \ref{sec:Preparation-of-Cluster},
we compare the HEA-TI with the conventional gate-based ansatz. Specifically,
we apply HEA-TI to state engineering and study the required quantum
resources in different system sizes. In Sec. \ref{sec:Solving-Electronic-Structure-Pro},
we apply it to another problem of solving the ground state energy
of diatomic molecules, and we further analyze the sensitivity to errors
in Sec. \ref{sec:Error-Analysis}. We describe a generalization of
the parameter shift method to evaluate the gradient for global evolution
time in Appendix \ref{sec:Parameter-Shift-Method}. The details of
the chemistry problem encoding and the sampling error analysis for
experiment simulation are provided in Appendices \ref{sec:Problem-Encoding-and}
and \ref{sec:Sampling-Error-of}, respectively.

\section{Hardware-efficient variational ansatz for trapped ions \label{sec:Hardware-efficient-variational-a}}

\subsection{Model of trapped-ion quantum computer}

Consider a trapped-ion quantum simulator consisting of $N$ ions that
are confined as a one-dimensional string. Each ion encodes a qubit
in its internal electronic levels. The spin-spin coupling can be realized
by a bichromatic laser beam that off-resonantly couples all the ions
to the transverse phonon modes \citep{monroeProgrammableQuantumSimulations2021}.
Such interaction can be described by the transverse field Ising model
(TFIM)
\begin{equation}
H_{\text{TFIM}}=\sum_{i<j}J_{ij}\sigma_{x}^{i}\sigma_{x}^{j}+\sum_{i}B\sigma_{z}^{i},\label{eq:H_TFIM}
\end{equation}
 where $\sigma_{x(z)}^{i}$ are the Pauli operators on the $i$-th
ion, $B$ is an effective, uniform magnetic field produced by suitable
detuning of the bichromatic laser, $J_{ij}$ is the effective spin-spin
coupling, and the summation $\sum_{i<j}$ runs over all ion pairs.
The Ising coupling coefficients $J_{ij}$ depend on details of the
experimental setups, such as laser detuning, Rabi frequency on each
ion, phonon modes' frequency, and Lamb-Dicke parameter. 

Our approach does not rely on the accurate implementation of any specific
entangler and can be used under any interaction pattern that can generate
sufficient entanglement. Typically, $J_{ij}$ can be approximated
by $J_{ij}\approx J_{0}/|i-j|^{\alpha}$ with $\alpha\in[0.5,1.8]$
\citep{monroeProgrammableQuantumSimulations2021}. Although varying
$J_{ij}$ provides extra degrees of freedom that enhance the ansatz's
expressive power, their optimization is hard due to the lack of a
straightforward way to evaluate the gradient. Here, we fix them as
constant $J_{0}=1,\alpha=1.5$ for our device. 

Single-qubit operations can be conveniently implemented by individual
addressing laser on each ion. In practice, these operations are significantly
faster than global evolution. Therefore, it is possible to keep the
global evolution always on while inserting single-qubit operations
at desired times to implement our circuit, which further simplifies
the experimental control sequences and helps suppress the errors from
turning on and off the global interaction.

\subsection{Hardware-efficient ansatz for trapped-ion quantum computer \label{subsec:Hardware-efficient-ansatz-for}}

Given such a trapped-ion device, we introduce the HEA-TI. Consider
a target observable $O$, we aim to optimize the parametrized state
$\ket{\psi(\boldsymbol{\theta})}=U(\boldsymbol{\theta})\ket{\psi_{0}}$
for minimal expectation value $\bra{\psi(\boldsymbol{\theta})}O\ket{\psi(\boldsymbol{\theta})}$.
The initial state $\ket{\psi_{0}}$ is a product state on the computational
basis. $U(\boldsymbol{\theta})$ is implemented by $D$ layers of
quantum circuits, each constructed by single-qubit rotations on all
the qubits and a global evolution, as shown in Fig. \ref{fig:model}.
The global evolution entangles all the qubits in the system, and the
single-qubit rotations provide extra degrees of freedom. 

For general problems with no specific symmetry, we adopt $H_{\text{TFIM}}$
with $B=1$ for global evolution and arbitrary $SU(2)$ rotation $R_{x}(\theta^{1})R_{y}(\theta^{2})R_{x}(\theta^{3})$
for each single-qubit rotation. The global Hamiltonian evolution $e^{-iH_{\text{TFIM}}t_{d}}$
at the $d$-th layer is parametrized by the evolution time $t_{d}$
under fixed $H_{\text{TFIM}}$. The single-qubit rotations at the
$d$-th layer are parametrized by $\left\{ \theta_{d}^{i,1},\theta_{d}^{i,2},\theta_{d}^{i,3}\right\} $
where $i$ labels the qubit index. In total, there are $D(3N+1)$
independent parameters. 

For problems that exhibit symmetry, we can further reduce the computational
resource by restricting the optimization of parameters to the symmetry
sector of interest. The electronic-structure problem we will discuss
below exhibits charge conservation symmetry $\sigma_{\text{tot}}^{z}=\sum_{i}\sigma_{i}^{z}$.
We limit the single-qubit operation to be Z-axis rotation with rotation
angle $\theta_{d}^{i}$ and adopt global Hamiltonian evolution that
preserves the charge conservation symmetry. This can be achieved by
setting the magnetic field in Eq. \ref{eq:H_TFIM} to be large enough
that overwhelms the Ising interactions $B\gg J_{ij}$. Then, after
rotating-wave approximation and moving into a rotating frame, the
effective Hamiltonian of the global evolution is

\[
H_{\text{XY}}=\sum_{i<j}J_{ij}\left(\sigma_{+}^{i}\sigma_{-}^{j}+\sigma_{-}^{i}\sigma_{+}^{j}\right).
\]

All parameters are initialized randomly at the beginning of optimization.
They are then optimized iteratively by a classical computer based
on the measurement outcomes of the system. We employ gradient-based
methods for precise optimization. The gradient for single-qubit rotations
can be obtained using the parameter-shift method \citep{schuldEvaluatingAnalyticGradients2019}.
An extension of this method enables the calculation of the gradient
of $t_{d}$, assuming the addition of a single multi-qubit gate in
the circuit is allowed (for details, see Appendix \ref{sec:Parameter-Shift-Method}).
Thus, we allow gradient descent of $t_{d}$ when benchmarking the
expressive power of HEA-TI in Sec. \ref{sec:Preparation-of-Cluster}.
In Sec. \ref{sec:Solving-Electronic-Structure-Pro} and \ref{sec:Error-Analysis},
we fix $\left\{ t_{d}\right\} $ for easier experimental implementation.
However, it is important to note that $t_{d}$ can still be optimized
using non-gradient methods. 

\subsection{Universality of HEA-TI \label{subsec:Universality-of-HEA-TI}}

The HEA-TI constructed above possesses universal computing ability.
For theoretical proof, it suffices to show that any two-qubit entangling
gate can be decomposed into global evolutions under the TFIM Hamiltonian
and single-qubit rotations. Specifically, we adopt $H_{\text{TFIM}}$
with $B=0$ for global evolution. By inserting single-qubit rotations
during the global evolution, we can construct a spin echo and cancel
the unwanted phases for all other ion pairs, thereby obtaining any
desired two-qubit entangling gate \citep{bholeRescalingInteractionsQuantum2020}. 

Thus, arbitrary quantum states can be prepared using a sufficiently
large number of layers $D$. However, this construction is often quite
inefficient, as a significant amount of entanglement is canceled during
the circuit. As will be discussed in Section \ref{sec:Preparation-of-Cluster},
it is sufficient to prepare the desired quantum state with much shallower
circuits that make full use of the generated entanglement.

\section{Cluster State Preparation using HEA-TI \label{sec:Preparation-of-Cluster}}

To compare the feasibility and efficiency of the HEA-TI with the conventional
gate-based ansatz, we first study their performance in the quantum
state engineering problem. 

Our construction of HEA-TI immediately makes the preparation of a
specific set of quantum states convenient. For example, the state
$e^{-iH_{\text{TFIM}}t}\ket{\psi_{0}}$ can be prepared with one global
evolution layer in HEA-TI, whereas traditional gate-based schemes
would require $\mathcal{O}\left(N^{2}\right)$ two-qubit gates to
match the degree of freedom in $H_{\text{TFIM}}$. However, the efficiency
of HEA-TI for a wider range of states is in question. Here, we show
that HEA-TI is also able to efficiently generate multipartite entangled
states that are convenient to prepare by traditional gate-based schemes.

We consider the cluster state \citep{nielsenClusterstateQuantumComputation2006},
which can be prepared by applying a few layers of two-qubit gates
on a product state. In two-dimensional systems, cluster states are
the resource state of measurement-based quantum computing. Here, we
consider only one-dimensional (1D) cluster states $\ket{C_{N}}$ with
$N$ qubits for convenience in simulation. The target observable can
be set as 
\[
O_{\text{cluster}}=-X_{1}Z_{2}-\sum_{i=2}^{N-1}Z_{i-1}X_{i}Z_{i+1}-Z_{N-1}X_{N},
\]
in which each term is a stabilizer of the 1D cluster state. Therefore,
the unique ground state of this observable corresponds to the 1D cluster
state.

We adopt the ansatz with $H_{\text{TFIM}}$ and arbitrary single-qubit
rotations for general problems, as described in Sec. \ref{subsec:Hardware-efficient-ansatz-for}.
We further allow optimization of evolution time $t_{d}$ to better
understand the role of global evolution and how they can best enhance
precision. The variational parameters are initialized randomly and
optimized for the minimal expectation value of $O_{\text{cluster}}$,
and the optimization is repeated $20$ times to escape from local
minima. The final-state fidelity $F=\left|\braket{C_{N}}{\psi(\theta_{\text{opt}})}\right|^{2}$
for different circuit depths is shown in Fig. \ref{fig:state-engineering}.
For all system sizes $N$, we observe an exponential improvement in
accuracy as we increase the circuit depth. As $N$ or $D$ increases,
the improvement slows down, which is expected due to the more complex
search space. Nevertheless, for fixed $N=8$, after the initial slowdown
at small $D$, we confirm that the accuracy keeps exponential improvement.
Therefore, we conclude that HEA-TI efficiently prepares cluster states
with high accuracy.

\begin{figure}
\includegraphics[width=0.9\columnwidth]{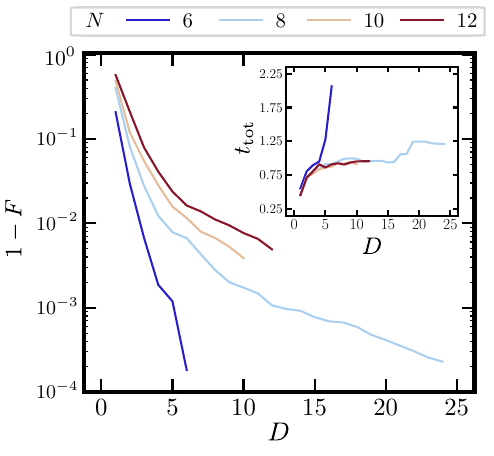}

\caption{Fidelity $F$ of 1D cluster state prepared by $D$-layer HEA-TI after
optimization. Due to the computational expense, we terminate the calculation
at an energy error of $10^{-3}$ for $N\protect\leq8$, and at a circuit
depth $D=N$ for $N>8$. For all system sizes $N$, we observe a consistent
exponential decrease in infidelity in the late part, indicating an
exponential improvement in accuracy as the circuit depth increases.
The inset shows the total global evolution time $\sum t_{d}$ as a
function of $D$, which exhibits saturating behavior for increasing
$D$ (except the smallest $N=6$ case). \label{fig:state-engineering}}
\end{figure}

We can understand the effectiveness of HEA-TI from the perspective
of entanglement. Unlike the construction for the theoretical proof
in Sec. \ref{subsec:Universality-of-HEA-TI}, the global evolution
is utilized efficiently rather than being canceled. We consider the
total evolution time $t_{\text{tot}}=\sum t_{d}$ of global interaction.
As shown in the inset of Fig. \ref{fig:state-engineering}, $t_{\text{tot}}$
in the variationally optimized circuit saturates as the circuit depth
increases, except for the smallest $N=6$ case, where $t_{\text{tot}}$
increases linearly as the accuracy improves exponentially. The saturation
behavior aligns with the constant entanglement in 1D cluster states.
This indicates that the optimal global evolution time is determined
by the entanglement in the target state, rather than increases with
the mere addition of circuit layers. The efficient use of entanglement
not only saves quantum resources but also brings a constraint on the
search space within a manageable scope of entanglement \citep{ortizmarreroEntanglementInducedBarrenPlateaus2021}.

Although we studied 1D cluster state that matches the topology of
our concerned 1D ion trap system, we expect no qualitative difference
when extending to 2D cluster states due to the long-range interaction
in global evolution, especially when a 2D ion crystal can be used
\citep{guoSiteresolvedTwodimensionalQuantum2024}. 

\section{Solving Electronic-Structure Problem using HEA-TI \label{sec:Solving-Electronic-Structure-Pro}}

Next, we study the performance of HEA-TI in industry-relevant problems
which have certain symmetry but a more complex underlying topology.
The results are then compared with the gate-based ansatz in a state-of-the-art
experiment. 

The electronic structure problem in quantum chemistry is one of the
most promising research areas for realizing practical quantum advantages
in the near future \citep{bhartiNoisyIntermediatescaleQuantum2022,caoQuantumChemistryAge2019,mcardleQuantumComputationalChemistry2020,alexeevQuantumcentricSupercomputingMaterials2023}.
As an important step in computing the energetic properties of molecules
and materials, the ground state problem aims to compute the ground
state energy of electrons in a given molecular to chemical accuracy,
i.e., $1.5$ milli-Hartree. Such problems can be formulated as the
interacting fermionic problem, where finding exact numerical solutions
by classical methods has a computational cost that scales exponentially
with the size of the system. Even few-atom electronic-structure problems
are interesting for quantum computers, making it attractive for applications
on current devices with limited ability \citep{hempelQuantumChemistryCalculations2018,zhaoOrbitaloptimizedPaircorrelatedElectron2023,guoExperimentalQuantumComputational2024,ollitraultEstimationElectrostaticInteraction2024}.

Here, we consider the ground state problem of three commonly used
molecules $\mathrm{H_{2}}$, $\mathrm{LiH}$, and $\mathrm{F_{2}}$
\citep{guoExperimentalQuantumComputational2024,zhaoOrbitaloptimizedPaircorrelatedElectron2023}.
The Hamiltonian of each electronic system is encoded into a target
observable, requiring $4,6$, and $12$ qubits, respectively. Details
of the encoding scheme are summarized in Appendix \ref{sec:Problem-Encoding-and}.

\begin{figure}
\includegraphics[width=0.85\columnwidth]{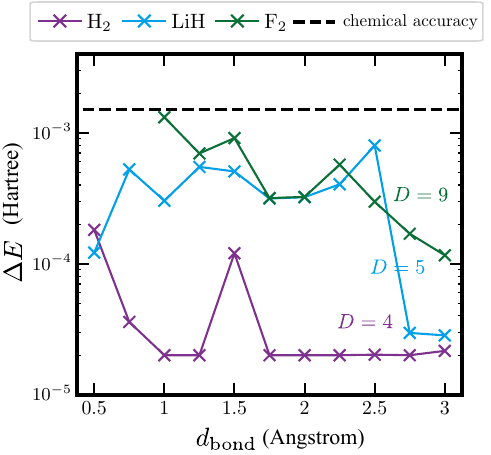}

\caption{Error of ground state energy $\Delta E$ compared to the exact results
for different molecules and bond distances $d_{\text{bond}}$. We
consider $\mathrm{H_{2}}$ (4 qubits, $D=4$), $\mathrm{LiH}$ (6
qubits, $D=5$), and $\mathrm{F_{2}}$ (12 qubits, $D=9$). For each
molecule, we show the least circuit depth to achieve chemical accuracy
(dashed line) for all bond distances. We fix the global evolution
time $t_{d}=0.4$ in all layers, thereby eliminating the need for
evaluating gradients of these parameters. The optimization results
are obtained from $20$ trials from random initial parameters. \label{fig:without-noise}
}
\end{figure}

We adopt the HEA-TI circuit structure for problems with $\sigma_{\text{tot}}^{z}$
symmetry, as described in Sec. \ref{subsec:Hardware-efficient-ansatz-for}.
To minimize the requirements on quantum hardware, for the remainder
of this paper, we do not evaluate the gradient of global evolution
time $t_{d}$ using multi-qubit gates. For simplicity and experimental
convenience, we fix the evolution time of all layers to a constant
$t_{d}\equiv t_{0}=0.4$, although $\left\{ t_{d}\right\} $ may still
be optimized using gradient-free methods.

After optimization, we show the error of obtained energy $\Delta E$
from the exact result. As shown in Fig. \ref{fig:without-noise},
with $D=4,5$, and $9$ layers of hardware-efficient operations, chemical
accuracy can be reached for all bond distances for the three molecules
$\mathrm{H_{2}}$, $\mathrm{LiH}$, and $\mathrm{F_{2}}$, respectively. 

To compare with the gate-based ansatz, we consider the UCCSD ansatz
\citep{romeroStrategiesQuantumComputing2018,anandQuantumComputingView2022},
another commonly used variational ansatz for electronic-structure
problems. By tailoring the circuit for each molecule individually,
one needs $10,22$, and $50$ two-qubit gates, or $9,20$, and $32$
layers of parallel gates, to achieve chemical accuracy \citep{guoExperimentalQuantumComputational2024}.
In this sense, we conclude that HEA-TI solves the chemistry problem
more efficiently than the gate-based UCCSD scheme. 

\section{Error Analysis \label{sec:Error-Analysis}}

As discussed above, HEA-TI is capable of solving the quantum chemistry
problem efficiently. However, under near-term quantum device capability,
the performance of HEA-TI is limited not only by the protocol's expressive
power but also by noise from various sources. In this section, we
evaluate our protocol's performance on near-term devices. Specifically,
we study two dominant error sources: sampling error of gradient and
drift of circuit parameters, and study how they influence the energy
outcome.

We select $D=4$ for $\mathrm{H_{2}}$ and $D=5$ for $\mathrm{LiH}$
to achieve chemical accuracy. For $\mathrm{F_{2}}$, we select $D=3$
to balance the expressive power of our protocol and the effect of
noise when implementing in experiments. 

\begin{figure*}
\includegraphics[width=1.8\columnwidth]{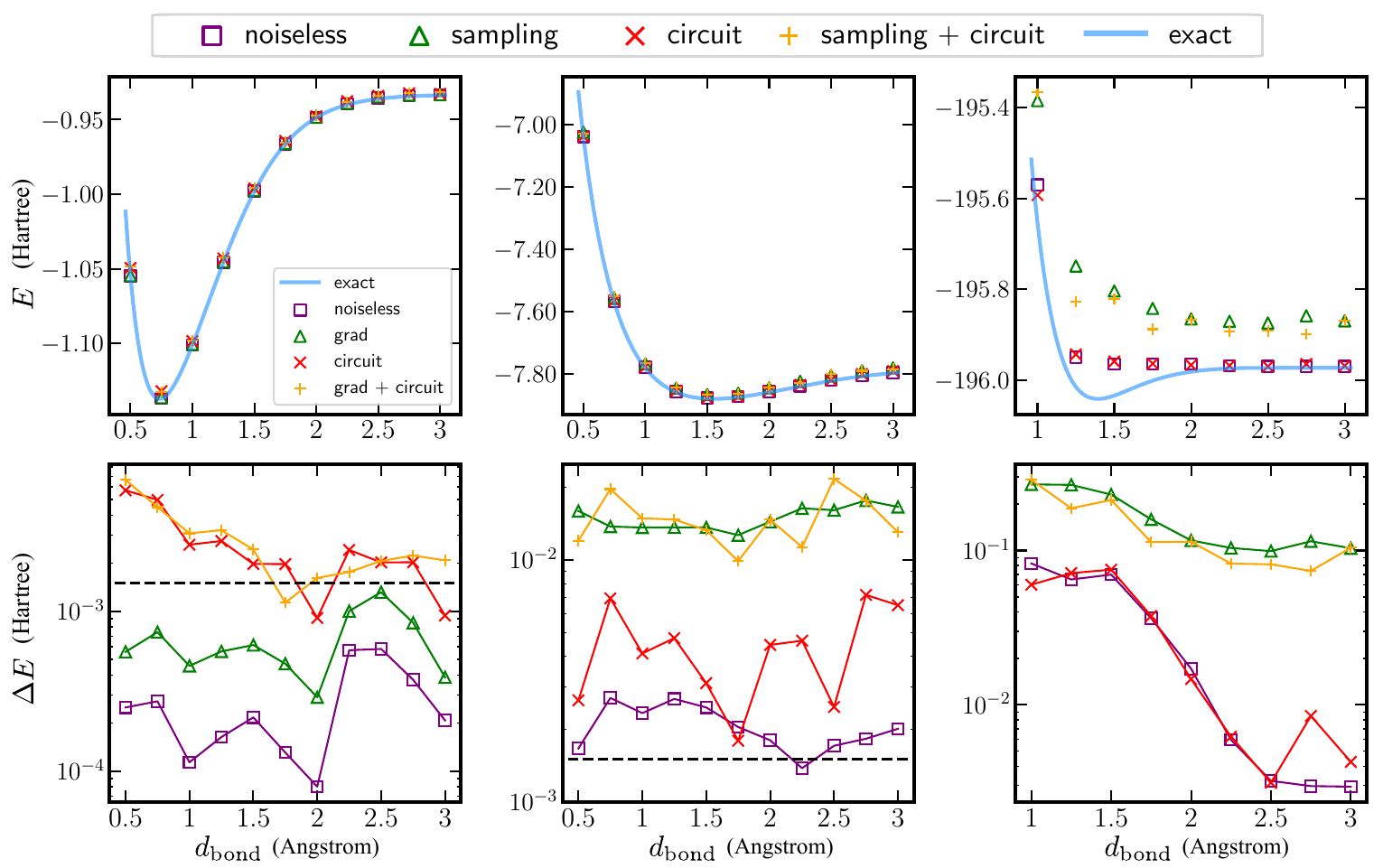}\caption{Ground state energy curves from simulated experimental implementation
for different molecules. (a-c) Average ground state energy curves
as functions of the bond distance for $\mathrm{H_{2}}$ (4 qubits,
$D=4$), $\mathrm{LiH}$ (6 qubits, $D=5$), and $\mathrm{F_{2}}$
(12 qubits, $D=3$). The results are averaged over $20$ trials from
random initial parameters. (d-f) Energy error compared to the exact
results. We compare the noiseless case (purple squares), only sampling
error in gradient (green triangles), only circuit drift error (red
crosses), and both errors (orange pluses). The exact ground state
energy (blue line) and the energy with chemical accuracy (dashed line)
are calculated theoretically as a reference. \label{fig:with-noise}
}
\end{figure*}

\subsection{Sampling error of gradient measurement \label{subsec:Sampling-error-of}}

Here, we analyze the statistical fluctuation under the limited runtime
of the trapped-ion quantum simulator.

Consider the target observable $O$ and the output state $\ket{\psi_{\theta}}=U_{T}...U_{i}(\theta)...U_{1}\ket{\psi_{0}}$,
where $\theta$ controls one of the single-qubit rotation gates $U_{i}(\theta)=e^{-i\theta G}$.
The gradient of $\theta$ can be determined by the parameter shift
method \citep{schuldEvaluatingAnalyticGradients2019}, in which the
same observable is evaluated at two shifted values of the parameters

\[
\partial_{\theta}\langle\psi_{\theta}|O|\psi_{\theta}\rangle=\langle\psi_{\theta+\pi/2}|O|\psi_{\theta+\pi/2}\rangle-\langle\psi_{\theta-\pi/2}|O|\psi_{\theta-\pi/2}\rangle.
\]

In principle, we can evaluate the gradient to any desired accuracy
by an arbitrary number of samples. In practice, however, the number
of samples we can obtain is limited by the experimental time cost
and the stability of the quantum device. This is more severe for the
trapped-ion quantum computers than the superconducting ones because,
despite the same scaling, the elementary gate operations for trapped-ion
are typically slower by a factor of about $10^{3}$.

To simulate statistical fluctuations, we align with typical experimental
parameters as described in Appendix \ref{sec:Sampling-Error-of}.
We estimate that the most costly molecule $\mathrm{F_{2}}$ would
take about $10$ hours, which is a feasible time cost for current
devices.

\subsection{Circuit drift error}

Another dominant source of error, circuit drift error, arises from
the miscalibration of experimental parameters or their drift during
the experiment. For instance, the drift in trap frequency can affect
the global interaction Hamiltonian in Eq. \ref{eq:H_TFIM}, while
the drift in laser intensity can alter both the single-qubit rotation
angle and the global interaction strength. For sufficiently slow parameter
drifts, the gradient descent process of VQE can dynamically adapt
to compensate for their effects. Therefore, we mainly consider the
shot-to-shot parameter fluctuation when sampling individual observables.
Specifically, we model the global interaction strength $J_{0}$ and
the parameters $\left\{ t_{d},\theta_{d}^{i}\right\} $ as being independently
subjected to the same level of fluctuation $x\rightarrow x(1+\mathcal{N}(0,\epsilon))$
where $\mathcal{N}(0,\epsilon)$ represents normal distribution with
$\epsilon=0.01$. 

\subsection{Simulation results \label{subsec:Error-analysis}}

Under the error model discussed above, we simulate the ground state
energy curves under realistic experimental parameters, as shown in
Fig. \ref{fig:with-noise}. The optimization converges on different
local minima due to random initial parameters, so we average the obtained
energy across multiple optimization rounds to characterize typical
experimental performance. In practice, better results can be achieved
if we optimize for multiple rounds and use the minimum, rather than
the average, of these values as the output.

For $\mathrm{H_{2}}$ and $\mathrm{LiH}$, the simulation result closely
matches the exact ground state of the encoded observable. The circuit
drift error dominates for $\mathrm{H_{2}}$, and the sampling error
dominates for $\mathrm{LiH}$. The precision of $\mathrm{H_{2}}$
achieves chemical accuracy for some bond distances, where the output
energy generated by noisy parameters has an error lower than the noise
itself, demonstrating the robustness of HEA-TI against noise. $\mathrm{F_{2}}$
suffers highly from insufficient circuit depth, resulting in large
systematic error regardless of the noise level. 

Optimizing variational parameters becomes more challenging as the
number of circuit parameters and the system size grows. As the system
size increases, the error caused by circuit drift scales linearly.
In comparison, the sampling error grows more rapidly. This is attributed
to the exponentially vanishing gradient, known as the barren plateau
phenomenon \citep{mccleanBarrenPlateausQuantum2018,wangNoiseinducedBarrenPlateaus2021},
which makes sampling increasingly inaccurate under fixed sampling
time. Consequently, for complex molecules, both error sources can
be dominant and must be suppressed to achieve chemical accuracy. The
circuit drift error can be mitigated by improving system stability,
while the sampling error can be addressed by employing more suitable
optimization algorithms. Also, various error mitigation methods and
barren plateau mitigation strategies can be employed to achieve chemical
accuracy, as will be discussed in Sec. \ref{sec:Discussions}. 

\section{Discussion \label{sec:Discussions}}

In this work, we study HEA-TI, a hardware-efficient ansatz tailored
for trapped-ion quantum simulators. By leveraging single-qubit operations
and global Hamiltonian evolution under native ion-ion interactions
in trapped-ion systems, this ansatz eliminates the need for individually
addressed multi-qubit gates, whose large-scale implementation poses
challenges on near-term quantum devices. We apply this ansatz to two
distinct problems in quantum state engineering and in quantum chemistry,
and we demonstrate its efficiency in both scenarios. We also analyze
the experimental error when implementing this ansatz on near-term
devices. When compared to the conventional gate-based methods, HEA-TI
significantly cuts down on the required resources for trapped-ion
systems, making it preferable for near-term trapped-ion platforms.

As estimated in Sec. \ref{subsec:Error-analysis}, reasonable noise
levels in near-term devices can easily lead to an error above  chemical
accuracy in electronic-structure problems. This is common in current
experiments \citep{omalleyScalableQuantumSimulation2016,streifSolvingQuantumChemistry2019,guoExperimentalQuantumComputational2024,zhaoOrbitaloptimizedPaircorrelatedElectron2023}
and can be alleviated by various error mitigation methods, including
probabilistic error cancellation \citep{temmeErrorMitigationShortDepth2017,sunMitigatingRealisticNoise2021},
gate error learning by Clifford fitting \citep{czarnikErrorMitigationClifford2021},
symmetry verification \citep{bonet-monroigLowcostErrorMitigation2018},
and readout error mitigation \citep{bravyiMitigatingMeasurementErrors2021},
which are widely applied in NISQ experiments.

While HEA-TI offers notable improvements in experimental feasibility
in trapped-ion devices, it shares a common drawback with other hardware-efficient
ansatzes: a complicated optimization process for the large number
of parameters. As discussed above, HEA-TI has an advantage of naturally
avoiding excessive entanglement, which can hinder learning \citep{ortizmarreroEntanglementInducedBarrenPlateaus2021}.
Still, when the system size and the circuit depth increases, one would
ultimately face the well-known challenges such as local minima trapping
\citep{anschuetzQuantumVariationalAlgorithms2022} and the barren
plateau phenomenon \citep{mccleanBarrenPlateausQuantum2018,wangNoiseinducedBarrenPlateaus2021}.
Over the past few years, various methods have been developed to address
these problems, including initialization \citep{grantInitializationStrategyAddressing2019},
circuit ansatz \citep{liuMitigatingBarrenPlateaus2022,pesahAbsenceBarrenPlateaus2021,liuPresenceAbsenceBarren2022},
cost function \citep{cerezoCostFunctionDependent2021}, and optimizer
\citep{chenLocalMinimaQuantum2023}. 

\section{Acknowledgments}

We acknowledge helpful discussions with Xiaoming Zhang, Lin Yao, Wending
Zhao, Jinzhao Sun, and Yueyuan Chen. This work was supported by the
Innovation Program for Quantum Science and Technology (No.2021ZD0301601),
the Shanghai Qi Zhi Institute, the Tsinghua University Initiative
Scientific Research Program and the Ministry of Education of China
through its fund to the IIIS. Y.K.W. acknowledges support from the
Tsinghua University Dushi program from Tsinghua University.

\appendix

\section{Parameter Shift Method for $t_{i}$ \label{sec:Parameter-Shift-Method}}

In this section, we describe our extension of the parameter shift
method that allows obtaining the gradient of evolution time $t_{i}$
in the HEA-TI protocol.

We initialize the system as $|\psi_{0}\rangle$ and apply unitary
operations $U=U_{T}...U_{i}(\theta)...U_{1}$, where $U_{i}(\theta)=e^{-i\theta G}$
is parameterized by $\theta$. The expectation value function of $O$
is given by

\[
f(\theta)=\langle\psi_{0}|U^{\dagger}OU|\psi_{0}\rangle
\]

This can be simplified by defining $O_{i}=U_{i}^{\dagger}...U_{T}^{\dagger}OU_{T}...U_{i}$
and $|\psi_{i}\rangle=U_{i}...U_{1}|\psi_{0}\rangle$
\[
f(\theta)=\langle\psi_{i-1}|U_{i}^{\dagger}(\theta)O_{i+1}U_{i}(\theta)|\psi_{i-1}\rangle
\]

The gradient of $\theta$, by definition, is
\begin{equation}
\frac{\partial f}{\partial\theta}=i\langle\psi_{i-1}|U_{i}^{\dagger}(\theta)\left[G,O_{i+1}\right]U_{i}(\theta)|\psi_{i-1}\rangle\label{eq:gradient_def}
\end{equation}
Our aim is to measure it directly from the quantum circuit.

We first review the commonly used parameter shift method \citep{schuldEvaluatingAnalyticGradients2019}
\begin{fact}[Parameter shift method]
When $G^{2}=I$, the gradient can be measured by adding an additional
gate $U_{\pm}=e^{\mp i\frac{\pi}{4}G}$ into the circuit. 
\end{fact}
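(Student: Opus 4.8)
The plan is to start from the exact gradient already derived in Eq.~\eqref{eq:gradient_def}, namely $\partial_\theta f = i\bra{\psi_{i-1}}U_i^\dagger(\theta)[G,O_{i+1}]U_i(\theta)\ket{\psi_{i-1}}$, and to reduce the entire statement to a single operator identity: that the commutator $i[G,O_{i+1}]$ can be rewritten as a difference of two conjugations of $O_{i+1}$ by the shift gates $U_\pm=e^{\mp i\pi/4\,G}$. Once that identity is in hand, each conjugation folds back into the circuit by merging $U_\pm$ with $U_i(\theta)$, turning the two terms into ordinary expectation values of $O$ at shifted parameter values, which is exactly what the quantum device can measure.

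First I would exploit the hypothesis $G^2=I$, which is the only place it enters. It lets me resum the exponential as $e^{\mp i\pi/4\,G}=\tfrac{1}{\sqrt2}(I\mp iG)$ (the $\pm1$-eigenvalue, Pauli-like structure), so that the conjugation closes on the three operators $O_{i+1}$, $GO_{i+1}G$, and $[G,O_{i+1}]$. Concretely, expanding gives $U_\pm^\dagger O_{i+1} U_\pm=\tfrac12\big(O_{i+1}+GO_{i+1}G\big)\pm\tfrac{i}{2}[G,O_{i+1}]$. Subtracting the two, the symmetric part $\tfrac12(O_{i+1}+GO_{i+1}G)$ cancels and leaves exactly $i[G,O_{i+1}]=U_+^\dagger O_{i+1}U_+-U_-^\dagger O_{i+1}U_-$.

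The final step is to insert this identity into Eq.~\eqref{eq:gradient_def} and absorb the shift gates using $e^{\mp i\pi/4\,G}e^{-i\theta G}=e^{-i(\theta\pm\pi/4)G}=U_i(\theta\pm\pi/4)$, so that $U_\pm U_i(\theta)\ket{\psi_{i-1}}$ is simply the circuit output with $\theta$ shifted by $\pm\pi/4$. This yields $\partial_\theta f=f(\theta+\pi/4)-f(\theta-\pi/4)$: the gradient is the difference of two expectation values obtained by inserting a single extra gate $U_\pm$ next to $U_i(\theta)$ and leaving the rest of the circuit untouched.

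The main point — really the \emph{crux} rather than a technical difficulty — is recognizing that $G^2=I$ is exactly the condition making the conjugation expansion truncate after the $[G,O_{i+1}]$ term; for a generic $G$ the conjugation would generate the full nested-commutator series and no two-point difference of shifted expectation values could reproduce the derivative. I would also flag the convention: because $U_i(\theta)=e^{-i\theta G}$ rather than the half-angle rotation $e^{-i\theta G/2}$, the natural shift is $\pi/4$ and there is no $1/2$ prefactor, consistent with the $U_\pm=e^{\mp i\pi/4\,G}$ that appear in the statement.
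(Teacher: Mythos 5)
Your proposal is correct and follows essentially the same route as the paper's proof: both use $G^{2}=I$ to expand $U_{\pm}=\tfrac{1}{\sqrt{2}}\left(I\mp iG\right)$, show that $U_{+}^{\dagger}O_{i+1}U_{+}-U_{-}^{\dagger}O_{i+1}U_{-}=i\left[G,O_{i+1}\right]$ by direct expansion, and substitute into Eq.~\ref{eq:gradient_def}. Your extra step of absorbing $U_{\pm}$ into the shift $U_{i}(\theta\pm\pi/4)$ is a correct (and clarifying) addition consistent with the convention $U_{i}(\theta)=e^{-i\theta G}$.
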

\begin{proof}
We have $U_{i}(\theta)=\cos\theta-i\sin\theta G$. 
\begin{align*}
 & -i\left(U_{+}^{\dagger}O_{i+1}U_{+}-U_{-}^{\dagger}O_{i+1}U_{-}\right)\\
 & =\frac{i}{2}\left[(1-iG)O_{i+1}(1+iG)-(1+iG)O_{i+1}(1-iG)\right]\\
 & =GO_{i+1}-O_{i+1}G
\end{align*}

Substituting into Eq. \ref{eq:gradient_def}, we get 
\[
\frac{\partial f}{\partial\theta}=\langle\psi_{i-1}|U_{i}^{\dagger}(\theta)\left(U_{+}^{\dagger}O_{i+1}U_{+}-U_{-}^{\dagger}O_{i+1}U_{-}\right)U_{i}(\theta)|\psi_{i-1}\rangle
\]
\end{proof}
Note that the above also applies to two-qubit gates, e.g. $G=X\otimes X$.

We then provide an extension of the parameter shift method. The following
can be applied directly in the HEA-TI protocol to obtain the gradient
of evolution time $t_{i}$.
\begin{fact}
When $G=\sum_{P,Q,i,j}a_{i,j}^{P,Q}P_{i}Q_{j}$ where $P,Q$ are Pauli
operators and $a_{i,j}^{P,Q}$ are coefficients, the gradient can
still be measured by adding $U_{\pm}^{i,j}=e^{\mp i\frac{\pi}{4}P_{i}Q_{j}}$. 
\end{fact}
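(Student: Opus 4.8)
The plan is to reduce everything to the single-term parameter-shift identity already established, using nothing more than the bilinearity of the commutator. The crucial observation is that the generator $G$ need not itself square to the identity: in the gradient formula (Eq.~\ref{eq:gradient_def}) the full evolution $U_i(\theta)=e^{-i\theta G}$ appears intact on both sides, so the only object that must be rewritten in a measurable form is the commutator $[G,O_{i+1}]$. Since the commutator is linear in its first argument, I would first expand
\[
[G,O_{i+1}]=\sum_{P,Q,i,j}a_{i,j}^{P,Q}\,[P_iQ_j,O_{i+1}],
\]
so that the problem decouples into one contribution per two-body Pauli term $P_iQ_j$.

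Next I would apply the parameter-shift method of the preceding fact term by term. Each $P_iQ_j$ is a product of two Pauli operators supported on distinct qubits, hence $(P_iQ_j)^2=P_i^2Q_j^2=I$, which is precisely the hypothesis $G^2=I$ required there. Its operator identity therefore yields, for each term,
\[
i\,[P_iQ_j,O_{i+1}]=(U_+^{i,j})^\dagger O_{i+1}U_+^{i,j}-(U_-^{i,j})^\dagger O_{i+1}U_-^{i,j},
\]
with $U_\pm^{i,j}=e^{\mp i\frac{\pi}{4}P_iQ_j}$. Substituting this back and restoring the surrounding factors $U_i(\theta)$ and $\ket{\psi_{i-1}}$, the gradient becomes a weighted sum of shifted expectation values,
\[
\pd{f}{\theta}=\sum_{P,Q,i,j}a_{i,j}^{P,Q}\,\bra{\psi_{i-1}}U_i^\dagger(\theta)\big((U_+^{i,j})^\dagger O_{i+1}U_+^{i,j}-(U_-^{i,j})^\dagger O_{i+1}U_-^{i,j}\big)U_i(\theta)\ket{\psi_{i-1}},
\]
each of which is measurable by inserting the single extra two-qubit gate $U_\pm^{i,j}$ immediately after the layer $U_i(\theta)$.

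The step that needs the most care is conceptual rather than computational: one must resist the temptation to split the exponential $e^{-i\theta G}$ itself, which would be invalid because the individual terms $P_iQ_j$ generally do not commute and no Trotter factorization is exact. The decomposition is legitimate only at the level of the commutator $[G,O_{i+1}]$, where bilinearity holds unconditionally regardless of commutativity. Once this is recognized, the single-term hypothesis $(P_iQ_j)^2=I$ is automatic and the result follows by linearity with no further estimates. A minor bookkeeping point is that the shift gate acts only on the pair of qubits addressed by the chosen term, whereas the full generator $G$ continues to act on all qubits through $U_i(\theta)$; keeping these two roles separate is what makes the argument clean.
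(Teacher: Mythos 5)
Your proposal is correct and follows essentially the same route as the paper: expand the commutator $[G,O_{i+1}]$ by linearity into the individual Pauli-pair terms, apply the single-term parameter-shift identity to each (since $(P_iQ_j)^2=I$), and recombine into a weighted sum of shifted expectation values. The paper additionally notes that the resulting sum is estimated by sampling the terms with probability proportional to $|a_{i,j}^{P,Q}|$, but this concerns the measurement strategy rather than the correctness of the identity itself.
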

\begin{proof}
We have 
\[
\left[G,O_{i+1}\right]=\sum a_{i,j}^{P,Q}\left[P_{i}Q_{j},O_{i+1}\right]
\]

So the gradient of $\theta$ can be decomposed into the sum of measured
gradients
\begin{align*}
\frac{\partial f}{\partial\theta}=\sum a_{i,j}^{P,Q} & \langle\psi_{i-1}|U_{i}^{\dagger}(\theta)\\
 & \left(\left.U_{+}^{i,j}\right.^{\dagger}O_{i+1}U_{+}^{i,j}-\left.U_{-}^{i,j}\right.^{\dagger}O_{i+1}U_{-}^{i,j}\right)\\
 & U_{i}(\theta)|\psi_{i-1}\rangle
\end{align*}

To obtain the gradient, we sample from all possible $i,j,P,Q$ with
probability proportional to $\left|a_{i,j}^{P,Q}\right|$.
\end{proof}

\section{Encoding of Quantum Chemistry Problems \label{sec:Problem-Encoding-and}}

In this section, we introduce the problem encoding method used in
the main text for the quantum chemistry problem \citep{peruzzoVariationalEigenvalueSolver2014}. 

For an electronic molecular Hamiltonian, we apply the Born-Oppenheimer
approximation that takes the nuclear coordinates as parameters of
the electronic structure. After the second-quantization, we represent
the molecular Hamiltonian in a discrete basis set. We used the PySCF
\citep{sunSCFPythonBased2018} software suite to compute the molecular
integrals necessary to define the second-quantized Hamiltonian. 

We perform calculations for the most important orbitals, known as
the reduced active space. Specifically, we reduce the spin orbitals
with expected occupation number of electrons close to 0 or 1 \citep{mcardleQuantumComputationalChemistry2020},
where the occupation numbers are determined by classically tractable
methods. See \citep{guoExperimentalQuantumComputational2024} for
details of the selected orbitals in the selected active space for
the three molecules where their irreducible representations are listed.

The molecular Hamiltonian after orbital reduction can be expressed
as
\[
\hat{H}=\sum_{i,j}h_{ij}\hat{a}_{i}^{\dagger}\hat{a}_{j}+\frac{1}{2}\sum_{i,j,k,l}g_{ijkl}\hat{a}_{i}^{\dagger}\hat{a}_{j}^{\dagger}\hat{a}_{k}\hat{a}_{l},
\]
where $\hat{a}_{i}$ and $\hat{a}_{i}^{\dagger}$ denote the fermionic
annihilation and creation operators associated with the $i$-th spin-orbital
in the active space, respectively. The coefficients $h_{ij}$ and
$g_{ijkl}$ are the one- and two-electron integrals that can be evaluated
classically.

Using the Jordan-Wignar transformation \citep{jordanUberPaulischeAquivalenzverbot1928},
we map the molecular Hamiltonian to the qubit system for evaluation
on a quantum processor. Specifically, each fermionic operator $\hat{a}_{j}$
is mapped as $\frac{1}{2}\left(X_{j}+iY_{j}\right)\bigotimes_{k=1}^{j-1}Z_{k}$,
where $X_{j},Y_{j},Z_{j}$ are Pauli operators acting on the $j$-th
qubit.

The initial state, also referred to as the reference state, is selected
by a classical approximation of the molecule. The molecules considered
in this work are closed-shell systems, with all orbitals doubly occupied.
Therefore, we adopt the restricted Hartree-Fock method. The Hartree-Fock
state is constructed so that half of the electrons occupy the lowest
energy spin orbitals in the spin $\alpha$ sector, and the remaining
half occupy the spin $\beta$ sector in the same way. 

As an example, for $\mathrm{LiH}$, we freeze the lowest two molecular
orbits and select three molecular orbits as the active space. After
reduction, the system contains two electrons and three molecular orbits.
The Hartree-Fock state is $|100100\rangle$, where the first and last
$3$-qubit state $|100\rangle$ correspond to the spin $\alpha$ and
$\beta$ parts, respectively. 

\section{Sampling Error of Gradient \label{sec:Sampling-Error-of}}

Under the settings described in Sec. \ref{subsec:Sampling-error-of},
we obtain the gradient of a parameter $\theta$ by sampling 
\[
\langle\psi_{\theta+\pi/2}|O|\psi_{\theta+\pi/2}\rangle-\langle\psi_{\theta-\pi/2}|O|\psi_{\theta-\pi/2}\rangle
\]

The target observable can be decomposed into the Pauli basis $O=\sum c_{i}P_{i}$,
where $P_{i}$ are Pauli operators. Its expectation value is sampled
by measuring $\ket{\psi_{\theta}}$ in Pauli basis $P_{i}$. The sampling
error of this step can be simulated as $\frac{\sigma}{\sqrt{M}}$,
where $\sigma$ is the standard deviation of measurement outcome and
$M$ is the number of samples we can perform.

To estimate the standard deviation, we model the measurement outcomes
of each term $P_{i}$ for $|\psi_{\theta\pm\pi/2}\rangle$ as independent
and identically distributed random variables $X_{i,\pm}$. Given a
limited number of $2M$ samples in each step, the best choice is to
allocate $M^{i}$ according to the variance for each $c_{i}X_{i,\pm}$.
Since we have no prior knowledge of $\left\langle X_{i}\right\rangle $,
we allocate $M^{i}=\frac{|c_{i}|}{\sum|c_{i}|}M$ samples for $X_{i,\pm}$,
respectively. The standard deviation of $\partial_{\theta}\langle\psi_{\theta}|O|\psi_{\theta}\rangle$
is then calculated by $\sigma^{2}=\sum_{i}\frac{c_{i}^{2}}{M^{i}}(\sigma_{i,+}^{2}+\sigma_{i,-}^{2})$,
where $\sigma_{i,\pm}^{2}=1-\langle X_{i,\pm}\rangle^{2}$. Therefore,
we can simulate the obtained gradient value following a normal distribution
$\mathcal{N}(\mu,\sigma^{2})$ where $\sigma\propto\frac{1}{\sqrt{M}}$.

Below, we analyze the time cost for the $\mathrm{F_{2}}$ molecule
which requires the most quantum resource. We use the Adam optimizer
\citep{kingmaAdamMethodStochastic2017} and limit the gradient descent
process to no more than $120$ steps. In each step, we update all
the $n_{\text{p}}=ND$ parameters opposite to the direction of the
gradient. We adopt a typical experimental parameter $J_{0}=2\pi\times0.5\text{ kHz}$
\citep{zhangObservationManybodyDynamical2017,liProbingCriticalBehavior2023}.
The Hamiltonian evolution time for $\mathrm{F_{2}}$ with $t_{\text{tot}}=Dt_{0}=3\times0.4=1.2$
would thus correspond to a physical time of $\frac{1.2}{2\pi\times500}\approx0.4\text{ ms}$.
The time cost for single-qubit gates is significantly smaller. Considering
laser cooling, pulse sequence, and CCD data collection, we estimate
the total time cost as $4.5\text{ ms}$ for each shot. We adjust the
measurement precision for the first few gradient steps by limiting
the number of shots at the $k$-th step as $2M_{k}=\frac{k^{2}}{\left(1+k\right)^{2}}2M_{0}$,
where $M_{0}$ is an overall parameter. We set $M_{0}=1000$.

Using the overlapped grouping method \citep{wuOverlappedGroupingMeasurement2023,valluryQuantumComputedMoments2020},
samples of more than one operators can be obtained from each Pauli-basis
measurement. Suppose the target observable can be decomposed into
Pauli operators as $O=\sum c_{i}P_{i}$. When two observables $P_{i}$
and $P_{j}$ are qubit-wise compatible, we can measure them simultaneously
on a single measurement basis. We estimate that the variance of the
observable estimator is reduced by half, which is typical for numerical
simulation \citep{wuOverlappedGroupingMeasurement2023}. 

From the analysis above, the total sampling time for $\mathrm{F_{2}}$
can be estimated by 
\[
n_{\text{p}}\times\sum_{k=1}^{120}2M_{k}\times4.5\text{ ms}=10\text{ hour}
\]
which is a reasonable cost for current devices. The time costs for
$\mathrm{H_{2}}$ and $\mathrm{LiH}$ are shorter due to less number
of gradient steps, less number of parameters, and a stronger interaction
strength $J_{0}$ achievable in ion chains with fewer qubits. We estimate
them to be less than $2$ hours and $5$ hours, respectively.

\bibliography{MyLibrary}

\end{document}